\def\BibTeX{{\rm B\kern-.05em{\sc i\kern-.025em b}\kern-.08em
    T\kern-.1667em\lower.7ex\hbox{E}\kern-.125emX}}
\newtheorem{prop}{Proposition}%[section]
\newtheorem{prob}{Problem}%[section]
\newtheorem{lem}{Lemma }%[section]s
\newtheorem{defn}{Definition}%[section]
\newtheorem{rem}{Remark }%[section]
\newcommand{\Real}{\mathbb R}
\newcommand{\eps}{\varepsilon}
\newcommand{\norm}[1]{\left\Vert#1\right\Vert}
\newcommand{\ra}{\rightarrow}
\newcommand{\set}[1]{\left\{#1\right\}}
\newcommand{\D}{\mathcal{D}}
\let\subset\subseteq
\title{\LARGE \bf
Compositionally Verifiable Vector Neural Lyapunov Functions for Stability Analysis of Interconnected Nonlinear Systems}
\author{Jun Liu, Yiming Meng, Maxwell Fitzsimmons, and Ruikun Zhou % <-this % stops a space
\thanks{This research was supported in part by an NSERC Discover Grant and the Canada Research Chairs program. This research was enabled in part by support provided by the Digital Research Alliance of Canada (alliance.ca).}% <-this % stops a space
\thanks{Jun Liu, Maxwell Fitzsimmons, and Ruikun Zhou are with the Department of Applied Mathematics, Faculty of Mathematics, University of Waterloo, Waterloo, Ontario N2L 3G1, Canada.  Emails: \texttt{j.liu@uwaterloo.ca, mfitzsimmons@uwaterloo.ca, ruikun.zhou@uwaterloo.ca}
        }%
\thanks{Yiming Meng is with  the
Coordinated Science Laboratory, University of Illinois Urbana-Champaign,
Urbana, IL 61801, USA. Emails: 
        \texttt{ymmeng@illinois.edu, yiming.meng@uwaterloo.ca}}%
}
\begin{document}

\maketitle
\thispagestyle{empty}
\pagestyle{empty}

%%%%%%%%%%%%%%%%%%%%%%%%%%%%%%%%%%%%%%%%%%%%%%%%%%%%%%%%%%%%%%%%%%%%%%%%%%%%%%%%
\begin{abstract}
While there has been increasing interest in using neural networks to compute Lyapunov functions, verifying that these functions satisfy the Lyapunov conditions and certifying stability regions remain challenging due to the curse of dimensionality. In this paper, we demonstrate that by leveraging the compositional structure of interconnected nonlinear systems, it is possible to verify neural Lyapunov functions for high-dimensional systems beyond the capabilities of current satisfiability modulo theories (SMT) solvers using a monolithic approach. Our numerical examples employ neural Lyapunov functions trained by solving Zubov's partial differential equation (PDE), which characterizes the domain of attraction for individual subsystems. These examples show a performance advantage over sums-of-squares (SOS) polynomial Lyapunov functions derived from semidefinite programming. 
\end{abstract}
\begin{keywords}
Learning, formal verification, neural networks, nonlinear systems, stability analysis, interconnection, curse of dimensionality 
\end{keywords}

%%%%%%%%%%%%%%%%%%%%%%%%%%%%%%%%%%%%%%%%%%%%%%%%%%%%%%%%%%%%%%%%%%%%%%%%%%%%%%%%
\section{Introduction}

One of the longstanding challenges in nonlinear control is the construction of Lyapunov functions for stability analysis and controller design. Since Lyapunov's seminal paper over a century ago \cite{lyapunov1992general}, there has been an ongoing quest for constructive methods for the design of Lyapunov functions. Both analytical \cite{haddad2008nonlinear,sepulchre2012constructive} and computational \cite{giesl2007construction, giesl2015review} strategies have been explored.

Among computational approaches for finding Lyapunov functions, sums-of-squares (SOS) techniques have perhaps received the most attention \cite{papachristodoulou2002construction,papachristodoulou2005tutorial,packard2010help,tan2008stability,topcu2008local,jones2021converse}. Not only can these techniques provide local stability analysis, but they can also offer regions of attraction estimates \cite{topcu2008local,tan2008stability,packard2010help}. Using semidefinite programming (SDP), the region of attraction can be enlarged by expanding the level set of a certain ``shape function" contained in the region of attraction estimate. While this is computationally appealing, the selection of such shape functions beyond the obvious choices of norm functions remains somewhat ad hoc \cite{khodadadi2014estimation}. 

In recent years, progress in machine learning and neural networks has begun to transform the realm of computational studies. Many authors have investigated the use of neural networks for computing Lyapunov functions (see, e.g., \cite{grune2021computing,dai2021lyapunov,zhou2022neural,abate2020formal,gaby2022lyapunov,kang2021data}, and \cite{dawson2023safe} for a recent survey). Perhaps one of the most notable differences between SDP-based synthesis of SOS Lyapunov functions and the training of neural Lyapunov functions is that neural network Lyapunov functions are not always guaranteed to be Lyapunov functions. Subsequent verification is required to ensure they satisfy Lyapunov conditions, e.g., using satisfiability modulo theories (SMT) solvers \cite{chang2019neural,ahmed2020automated}. This process is time-consuming, especially when seeking a maximal Lyapunov function \cite{liu2023towards}, as accurate numerical verification becomes more challenging near the stability region's boundary. A neural network Lyapunov function offers the advantage of universal approximation and embracing non-convex optimization, unlike the SDP's limitation to convex optimization. It can sometimes capture regions of attraction better and is capable of dealing with non-polynomial dynamics. Moreover, it can leverage the abundant machine learning infrastructure, including neural network architectures, optimization algorithms, and graphics processing unit (GPU) computing. Overcoming the curse of dimensionality in verifying neural Lyapunov functions remains a significant challenge.

In this work, we demonstrate that by exploiting the compositional structure of interconnected nonlinear systems,  where each subsystem admits a neural Lyapunov function in the absence of interaction, one can verify a vector neural Lyapunov function for the interconnected high-dimensional system to determine a region of attraction. This approach can address problems beyond the capabilities of current SMT solvers \cite{gao2013dreal} using a monolithic approach. Specifically, we conduct numerical experiments using neural Lyapunov functions trained by solving Zubov's partial differential equation (PDE), which characterizes the domain of attraction for individual subsystems \cite{zubov1964methods,kang2021data,liu2023towards}. We show that these functions outperform SOS polynomial Lyapunov functions obtained through semidefinite programming in terms of capturing the region of attraction of the interconnected system. This is an extended version of the paper presented in \cite{liu2023compositionally_acc}, with an additional appendix for omitted technical proofs. 

\section{Preliminaries}

\subsection{Interconnected system}

Consider a network of nonlinear systems of the form
\begin{equation}
    \label{eq:sys}
    \dot x_i  = f_i(x_i) + \sum_{j\neq i}  G_{ij}(x_i,x_j),
\end{equation}
where each $f_i:\,\Real^{n_i}\ra\Real^{n_i}$ and $G_{ij}:\,\Real^{n_i}\times \Real^{n_j}\ra\Real^{n_i}$ are assumed to be locally Lipschitz and $i,j\in \set{1,\ldots,l}$, which indicates the collection of subsystems in the network. It is assumed that $f_i(0)=0$ and $G_{ij}(0,0)=0$. Hence, the origin is an equilibrium point for each individual subsystem in the absence of the interconnection terms $G_{ij}$ and for the overall interconnected system. 

We use $x=(x_1,\ldots,x_l)\in \Real^{n_1}\times\cdots \times\Real^{n_l}=\Real^N$, where $N=n_1+ n_2 + \cdots + n_l$, to denote the state vector of the networked system (\ref{eq:sys}) and $\phi(t,x)$ to indicate the solution to (\ref{eq:sys}) starting from the initial condition $\phi(0)=x$. We also use $\phi_i(t,x)$ to denote the solution trajectory of the $i$th subsystem. 

\begin{defn}[Domain of Attraction]
    Suppose that the origin is asymptotically stable for (\ref{eq:sys}), the \textit{domain of attraction} of the origin for (\ref{eq:sys}) is defined as
    $$
    \D: = \set{x\in\Real^N:\,\lim_{t\ra \infty} \phi(t,x) = 0}.   
    $$ 
    Any invariant subset of $\D$ is called a \textit{region of attraction}. 
\end{defn}

\subsection{Problem formulation}

A common approach to computing regions of attraction is using sub-level sets of Lyapunov functions. Suppose that we can compute a Lyapunov function for each individual subsystem in the absence of the interconnection terms $G_{ij}$, namely subsystems of the form
\begin{equation}
    \label{eq:subsys}
    \dot x_i  = f_i(x_i),\quad i=1,\ldots, l,
\end{equation}
such that sub-level sets of the form
\begin{equation}
    \label{eq:omegai}
\mathcal{V}_i(v_i):=\set{x_i\in \Real^{n_i}:\, V_i(x_i)\le v_i}
\end{equation}
are verified regions of attraction for individual subsystems in (\ref{eq:subsys}), where $v_1,\ldots,v_l$ are positive constants. 

Clearly, the absence of the interconnection, i.e., $G_{ij}\equiv 0$, implies that the set $\Omega_1\times \cdots \times \Omega_l$ is a region of attraction for (\ref{eq:sys}). However, with interconnection, this is no longer the case. The main objective of this paper is to formulate compositionally verifiable conditions to certify regions of attraction for the interconnected system (\ref{eq:sys}).

\begin{prob}
Given regions of attraction for individual subsystems provided by (\ref{eq:omegai}), verify regions of attraction for the interconnected system (\ref{eq:sys}). 
\end{prob}

% \subsubsection*{Notation} 

\section{Stability and reachability analysis using vector Lyapunov functions}\label{sec:analysis}

Consider a vector Lyapunov function $V:\,\Real^N\ra \Real^l$ by 
\begin{equation}
    \label{eq:V}
V(x)=(V_1(x_1),\ldots,V_l(x_l)),
\end{equation}
where $V_i:\,\Real^{n_i}\ra\Real$ are scalar Lyapunov functions for subsystems in (\ref{eq:subsys}).

\subsection{Local stability analysis}

We assume that each $f_i$ is continuously differentiable and the origin is exponentially stable for each individual subsystem (\ref{eq:subsys}). Let $A_i = D f_i(0)$, where $Df_i$ is the Jacobian of $f_i$. By our assumption, $A_i$ is a Hurwitz matrix. Let $Q_i>0$ be any positive definite matrix and let $P_i>0$ solve the Lyapunov equation 
\begin{equation}
    \label{eq:lyap}
    P_iA_i +A_i^T P_i = -Q_i. 
\end{equation}
Rewrite (\ref{eq:subsys}) as 
\begin{equation}
    \label{eq:subsys2}
    \dot x_i = A_ix_i + g_i(x_i),
\end{equation}
where $g_i(x_i)=f_i(x_i)-A_ix_i$ satisfies $\lim_{x_i\ra 0}\frac{\norm{g_i(x_i)}}{\norm{x_i}}=0$.

We can analyze local stability of the interconnected system (\ref{eq:sys}) using the following proposition. 

\begin{prop}\label{prop:local}
Let $P_i$ and $Q_i$ satisfy (\ref{eq:lyap}). For each $i=1,\ldots,l$ and $p>0$, define the set 
$$
\mathcal{P}_i(p) := \set{x\in\Real^{n_i}:\,x^TP_ix\le p}. 
$$
Suppose that there exists a positive vector $c=(c_1,\ldots,c_l)\in\Real^l$ and a matrix of nonnegative elements $R=(r_{ij})$ such that the following inequalities hold:
\begin{align}
\norm{P_i D g_i(x_i)} &\le r_{ii}, \label{eq:ineq1}\\
\norm{P_i D G_{ij}(x_i,x_j)} &\le r_{ij}, \label{eq:ineq2}
\end{align}
for all $x_i\in \mathcal{P}_i(c_i)$ and $x_j\in \mathcal{P}_j(c_j)$.  
Define $\Lambda = (\lambda_{ij})$ by 
\begin{align}
\lambda_{ii} &= - \frac{\lambda_{\min}(Q_i)}{\lambda_{\max}(P_i)} + \frac{2(r_{ii} + \sum_{j\neq i} r_{ij})}{\lambda_{\min}(P_i)},\label{eq:gain1}\\
\lambda_{ij} &= \frac{r_{ij}}{\lambda_{\min}(P_j)},\quad j\neq i. \label{eq:gain2}
\end{align}
If there exists a positive vector $p=(p_1,\ldots,p_l)\in \Real^l$ such that $p\le c$ and $\Lambda p < 0$ (componentwise), then the set
$$
\mathcal{P}(p) := \mathcal{P}_1(p_1)\times \cdots \times \mathcal{P}_l(p_l), 
$$
is a region of attraction for the interconnected system (\ref{eq:sys}). 
\end{prop}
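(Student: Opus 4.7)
The plan is to treat $V_i(x_i) = x_i^T P_i x_i$ as the components of a vector Lyapunov function, derive a componentwise linear differential inequality $\dot V \le \Lambda V$ valid throughout $\mathcal{P}(c)$, and then invoke the theory of quasi-monotone comparison systems associated with the Metzler matrix $\Lambda$. The starting point is to differentiate $V_i$ along (\ref{eq:sys}); using (\ref{eq:subsys2}) and the Lyapunov equation (\ref{eq:lyap}) this yields $\dot V_i = -x_i^T Q_i x_i + 2 x_i^T P_i g_i(x_i) + 2\sum_{j\neq i} x_i^T P_i G_{ij}(x_i,x_j)$.

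Next I would bound the perturbation and coupling terms via the mean value theorem along rays from the origin. Because $g_i(0)=0$ and $G_{ij}(0,0)=0$, assumptions (\ref{eq:ineq1})--(\ref{eq:ineq2}) give $\norm{P_i g_i(x_i)} \le r_{ii}\norm{x_i}$ on $\mathcal{P}_i(c_i)$ and $\norm{P_i G_{ij}(x_i,x_j)} \le r_{ij}\sqrt{\norm{x_i}^2+\norm{x_j}^2}$ on $\mathcal{P}_i(c_i)\times\mathcal{P}_j(c_j)$. Combining Cauchy--Schwarz with the Young inequality $2ab\le a^2+b^2$ produces the asymmetric split $2 x_i^T P_i G_{ij}(x_i,x_j) \le r_{ij}\bigl(2\norm{x_i}^2+\norm{x_j}^2\bigr)$; together with $-x_i^T Q_i x_i \le -\lambda_{\min}(Q_i)\norm{x_i}^2$ and the Rayleigh bounds $V_i/\lambda_{\max}(P_i)\le\norm{x_i}^2\le V_i/\lambda_{\min}(P_i)$, this translates into $\dot V_i \le \lambda_{ii} V_i + \sum_{j\neq i} \lambda_{ij} V_j$ on $\mathcal{P}(c)$, with $\Lambda = (\lambda_{ij})$ exactly as defined in (\ref{eq:gain1})--(\ref{eq:gain2}).

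Since $\Lambda$ is Metzler (its off-diagonal entries being manifestly nonnegative), I would then certify $\mathcal{P}(p)$ as forward invariant: whenever a trajectory hits the face $V_i(x_i)=p_i$ with $V_j(x_j)\le p_j$ for $j\neq i$, the bound $\dot V_i \le \lambda_{ii}p_i+\sum_{j\neq i}\lambda_{ij}V_j \le [\Lambda p]_i < 0$ forces the vector field strictly inward. Invariance keeps the trajectory inside $\mathcal{P}(p)\subseteq \mathcal{P}(c)$, so $\dot V \le \Lambda V$ persists for all $t \ge 0$, and the quasi-monotone comparison principle yields $V(x(t))\le e^{\Lambda t}V(x(0))$. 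The existence of $p>0$ with $\Lambda p < 0$ together with the Metzler structure makes $-\Lambda$ a nonsingular M-matrix, hence $\Lambda$ is Hurwitz; this forces $V(x(t))\to 0$, and by positive definiteness of each $P_i$, $x(t)\to 0$, so $\mathcal{P}(p)$ is a region of attraction.

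I expect the main obstacle to be the bookkeeping in the coupling terms: the Young-inequality split of $2\norm{x_i}\norm{x_j}$ has to be tuned precisely --- opting for $2\norm{x_i}^2+\norm{x_j}^2$ rather than a symmetric split --- in order to reproduce the exact coefficients (\ref{eq:gain1})--(\ref{eq:gain2}); other choices deliver a valid but less sharp $\Lambda$. A secondary point deserving explicit attention is the passage from $\Lambda p < 0$ for a single positive vector $p$ to Hurwitz-ness of $\Lambda$, which rests on the M-matrix/Perron--Frobenius characterization and should be cited rather than re-proved.
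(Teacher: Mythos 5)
Your proposal is correct and follows essentially the same route as the paper's own proof: the integral mean value theorem along rays from the origin to convert the Jacobian bounds (\ref{eq:ineq1})--(\ref{eq:ineq2}) into bounds on $P_i g_i$ and $P_i G_{ij}$, the asymmetric Young split $2\norm{x_i}\,\norm{(x_i,x_j)}\le 2\norm{x_i}^2+\norm{x_j}^2$ recovering exactly (\ref{eq:gain1})--(\ref{eq:gain2}), the face-by-face invariance argument for $\mathcal{P}(p)$, and the Metzler comparison principle plus the M-matrix characterization of Hurwitzness. No gaps.
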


\begin{rem}
While we assumed that $G_{ij}$ takes the form in (\ref{eq:sys}), a more general form of interconnection, $G(x)$, is permissible for the analysis in Proposition \ref{prop:local} to proceed. We opted for the summation form to exploit the compositional nature of (\ref{eq:ineq1}) and (\ref{eq:ineq2}). If a decomposition is available through other inequality estimates, the results of Proposition \ref{prop:local} can be applied straightforwardly. Moreover, the gain bounds (\ref{eq:gain1}) and (\ref{eq:gain2}) may be further improved (see the appendix).
\end{rem}

\begin{rem}\label{rem:m-matrix}
The condition that there exists some $p>0$ such that $\Lambda p<0$ or $(-\Lambda) p>0$ is one of many conditions equivalent to $-\Lambda$ being a nonsingular $M$-matrix (condition $K_{33}$ in \cite{plemmons1977m}), provided $-\Lambda$ is a $Z$-matrix (square matrices with nonpositive off-diagonal entries). Another equivalent condition is that $\Lambda$ is Hurwitz (condition $J_{29}$ in \cite{plemmons1977m}).
\end{rem}

\begin{rem}
Vector Lyapunov functions and the comparison lemma have been well-known since the 1960s \cite{bellman1962vector,bailey1965application,matrosov1962theory}. Their application in analyzing large-scale systems was popularized in the 1970s and 80s. For interested readers, see a survey in \cite{michel1983status} and the books \cite{siljak1978large,vidyasagar1981input}. ROA estimates have received less attention, but prior work exists \cite{weissenberger1973stability,bitsoris1977stability}. Our contribution is to formulate this in a compositional form that is readily verifiable by numerical SMT solvers, such as dReal \cite{gao2013dreal}, even with numerical errors around the origin. Specifically, we examine the linearization around the origin and use the robustness implied by linear system stability to bound high-order terms around the origin. One advantage of using a vector Lyapunov function, as opposed to a weighted sum of scalar Lyapunov functions, for stability and reachability analysis is that it offers invariance certification for hyperrectangular sets (in $\mathbb{R}^l$) rather than ellipsoids or rhombi, a benefit noted in \cite{weissenberger1973stability}.
\end{rem}

\subsection{Reachability analysis}

The local analysis in the previous section is inherently conservative because of the use of the matrix norm and a quadratic Lyapunov function for an interconnected nonlinear system. The next result provides verifiable conditions to expand the region of attraction through reachability analysis using vector Lyapunov functions. 

\begin{prop}\label{prop:reach}
Suppose that there exist Lyapunov functions $V_i:\,\Real^{n_i}\ra\Real$, positive vectors 
$c=(c_{1},\ldots,c_{l})\in\Real^l$ and $v=(v_{1},\ldots,v_{l})\in\Real^l$, and a constant $\eps<0$ 
such that $c\le v$ and the following inequalities hold:
\begin{align}
\nabla V_i(x_i) \left [f_i(x_i) + \sum_{j\neq i}  G_{ij}(x_i,x_j) \right] \le \eps < 0 \label{eq:ineq3}
\end{align}   
for all $x_i \in \set{x_i\in\Real^{n_i}:\, c_{i}\le V_i(x_i) \le v_i }$ and all $x_j\in \mathcal{V}_j(v_j):=\set{x_j\in\Real^{n_j}:\, V_j(x_j) \le v_j }$. Then solutions of the interconnected system (\ref{eq:sys}) starting in $\mathcal{V}(v)= \mathcal{V}_1(v_1)\times \cdots \times \mathcal V_l(v_l)$ reach $\mathcal{V}(c)$ in finite time and remain there afterwards. 
\end{prop}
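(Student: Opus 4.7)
The plan is to prove three claims in sequence: (a) the hyperrectangular set $\mathcal{V}(v)=\mathcal{V}_1(v_1)\times\cdots\times\mathcal{V}_l(v_l)$ is positively invariant under the flow of (\ref{eq:sys}); (b) every trajectory starting in $\mathcal{V}(v)$ reaches $\mathcal{V}(c)$ within time $T\le \max_i (v_i-c_i)/|\eps|$; and (c) $\mathcal{V}(c)$ itself is positively invariant. The key technical device throughout is a boundary-tangent argument that exploits the strict inequality $\nabla V_i\cdot[\,\cdot\,]\le\eps<0$ on the annulus $\{c_i\le V_i(x_i)\le v_i\}$ whenever all the coupling states satisfy $V_j(x_j)\le v_j$.

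For (a), I would argue by contradiction. Suppose a trajectory $\phi(\cdot,x_0)$ with $x_0\in\mathcal{V}(v)$ first leaves $\mathcal{V}(v)$ at some time $t^*$. By continuity, at $t^*$ some component index $i$ satisfies $V_i(\phi_i(t^*,x_0))=v_i$, while $V_j(\phi_j(t^*,x_0))\le v_j$ for all $j$. Since $v_i\ge c_i$, the annulus hypothesis applies and yields $\tfrac{d}{dt}V_i(\phi_i(t,x_0))\big|_{t=t^*}\le\eps<0$. But $t^*$ being the first exit time forces this derivative to be nonnegative, a contradiction. Hence $\mathcal{V}(v)$ is forward invariant.

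For (b), fix $i$ and define $T_i:=\inf\{t\ge 0:V_i(\phi_i(t,x_0))\le c_i\}$. While $t<T_i$, by (a) we have $V_j(\phi_j(t,x_0))\le v_j$ for all $j$, and $c_i<V_i(\phi_i(t,x_0))\le v_i$, so inequality (\ref{eq:ineq3}) gives $\tfrac{d}{dt}V_i(\phi_i(t,x_0))\le\eps$. Integrating,
\begin{equation*}
V_i(\phi_i(t,x_0))\le V_i(x_{0,i})+\eps t\le v_i+\eps t,
\end{equation*}
which is $\le c_i$ once $t\ge(v_i-c_i)/|\eps|$, so $T_i$ is finite and uniformly bounded. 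Then $T:=\max_i T_i$ is the time by which every component has entered its sub-level set. For (c), a completely analogous tangent argument shows that no component $V_i$ can re-cross $c_i$ upward: at any putative crossing time $\tau$, we would have $V_i(\phi_i(\tau,x_0))=c_i$ and all other $V_j\le v_j$ (by (a)), so again $\tfrac{d}{dt}V_i(\phi_i(\tau,x_0))\le\eps<0$, contradicting the crossing. Combining, $\phi(t,x_0)\in\mathcal{V}(c)$ for all $t\ge T$.

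The main obstacle is making the boundary-crossing arguments in (a) and (c) airtight — in particular, handling the case where several components simultaneously touch their upper or lower thresholds. The resolution is straightforward since inequality (\ref{eq:ineq3}) is imposed pointwise for each $i$ uniformly over all admissible couplings $(x_j)_{j\ne i}$ with $V_j(x_j)\le v_j$, so the tangent/derivative bound applies to every component that sits on its boundary at the same instant. Beyond this, the remaining work is routine continuity and integration along trajectories, exploiting the $C^1$ regularity of $V_i$ and local Lipschitz continuity of the right-hand side of (\ref{eq:sys}).
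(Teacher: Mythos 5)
Your proposal is correct and follows essentially the same route as the paper's (much terser) proof: forward invariance of $\mathcal{V}(v)$ via the negative derivative on each boundary $V_i = v_i$, finite-time descent through the annulus $\{c_i \le V_i \le v_i\}$ by integrating $\dot V_i \le \eps$, and invariance of $\mathcal{V}(c)$ by the same tangency argument at $V_i = c_i$. You have simply spelled out the boundary-crossing and integration details that the paper leaves as ``straightforward.''
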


\section{Training vector neural Lyapunov functions and SMT verification}\label{sec:method}

In this section, we describe the proposed computational approach for training and verifying neural vector Lyapunov functions for interconnected systems. Both the learning and verification are conducted in a compositional fashion so that it can leverage the compositional structure of the interconnected system to scale the approach to tackle problems beyond the reach of monolithic approaches.

\subsection{Training neural Lyapunov functions for subsystems}\label{sec:method:train}

We following the approach presented in \cite{liu2023towards,liu2023physics,liu2024lyznet} for training neural Lyapunov functions for individual subsystems using Zubov's equation and physics-informed neural networks \cite{lagaris1998artificial,raissi2019physics}. The main idea is to solve Zubov's PDE 
\begin{equation}
    \label{eq:zubov}
    \dot W(x):= \nabla W(x) \cdot f(x) = - \Psi(x) (1-W(x)), 
\end{equation}
where $W$ is a positive definite function to be solved and $\Psi$ is also a positive definite function. Two particular choices \cite{liu2023towards, liu2023physics} of \(\Psi\) are \(\Psi_1(x) = \alpha \norm{x}^2\) or \(\Psi_2(x) = \alpha (1+W(x)) \norm{x}^2\) for some positive constant \(\alpha > 0\). For numerical examples in this paper, we choose \(\Psi(x) = \alpha (1+W(x)) \norm{x}^2\) with \(\alpha = 0.1\), which corresponds to taking a transform \(\beta(s) = \tanh(\alpha s)\) of a usual Lyapunov function \cite{kang2021data}. 

Let $W_{\text{NN}}(x;\theta)$ be a neural network solution to Zubov's PDE (\ref{eq:zubov}) on a domain $\Omega\subset\Real^N$. Consider the training loss 
\begin{equation}
    \label{eq:loss}
    \mathcal{L}(\theta) = L_r(\theta)  + L_b(\theta) + L_d(\theta), 
\end{equation}
where $L_r$ is the residual error of the PDE, evaluated over a set of collocation points $\set{x_i}_{i=1}^{N_r}\subset \Omega$ as
\begin{equation}
    \label{eq:L_r}
    L_r = \frac{1}{N_c}\sum_{i=1}^{N_c}(\nabla_{x} W_{\text{NN}}(x_i;\theta) f(x_i) + \Psi(x_i)(1-W_{\text{NN}}(x_i;\theta)))^2. 
\end{equation}
The loss $L_b$ captures boundary conditions. As in \cite{liu2023towards, liu2023physics}, we can encourage the following inequality near the origin:
\begin{equation} \label{eq:c1Wc2}
\beta(c_1\norm{x}^2) \le W(x) \le \beta(c_2 \norm{x}^2),
\end{equation}
where $\beta(s)=\tanh(\alpha s)$ as discussed above.

Finally, the term $L_d(\theta)$ can capture a data loss of the form 
\begin{equation}
    \label{eq:L_d}
    L_d(\theta) = \frac{1}{N_d}\sum_{i=1}^{N_d}(W_{\text{NN}}(y_i;\theta)-\hat W(y_i))^2,
\end{equation}
where $\{\hat W(y_i)\}_{i=1}^{N_d}$ is a set of data points, which can be obtained by forward integration of the ODE defining (\ref{eq:subsys}) \cite{kang2021data}.

\subsection{Verification of stability and reachability using vector neural Lyapunov functions and SMT solvers}\label{sec:method:verify}

We aim to use satisfiability modulo theories (SMT) solvers to verify the inequalities (\ref{eq:ineq1}), (\ref{eq:ineq2}), and (\ref{eq:ineq3}) in terms of quadratic Lyapunov functions and the trained neural Lyapunov functions to certify local stability and regions of attraction. The procedure proceeds as follows. 

\textbf{(1) Local stability: } We use quadratic Lyapunov functions, obtained by solving Lyapunov equations (\ref{eq:lyap}), to verify inequalities (\ref{eq:ineq1}) and (\ref{eq:ineq2}). The matrix $\Lambda$ is computed according to (\ref{eq:gain1}) and (\ref{eq:gain2}). The choice of initial $c$ is arbitrary, as long as it is sufficiently small such that the matrix $-\Lambda$ becomes a nonsingular $M$-matrix \cite{plemmons1977m}. This is always achievable because as $r_{ij}$ becomes sufficiently small, the matrix $\Lambda$ will be diagonally dominating. Once the $M$-matrix condition is met, there always exists some $p>0$ such that $\Lambda p<0$ \cite{plemmons1977m}. 

For simplicity, we can use the largest vector $c$ obtained by the conditions in \cite{liu2023towards,liu2023physics} for local stability of individual subsystems and scale it down until inequalities (\ref{eq:ineq1}) and (\ref{eq:ineq2}) are verified such that $\Lambda$ defined by (\ref{eq:gain1}) and (\ref{eq:gain2}) satisfies  $\Lambda c<0$. As a result, we have verified a local stability region 
$$
\mathcal{P}(p) := \mathcal{P}_1(p_1)\times \cdots \times \mathcal{P}_l(p_l), 
$$
where $\mathcal{P}_i(p_i) := \set{x_i\in\Real^{n_i}:\,x_i^TP_ix_i\le p_i}$, as defined in Proposition \ref{prop:local}.

\textbf{(2) Reachability by quadratic Lyapunov functions:} The local stability region can be very small due to the use of matrix norms for estimates. We can employ a successive procedure to enlarge the region of attraction by Proposition \ref{prop:reach} as follows. Suppose there exists a sequence of vectors $c^m_P$ ($m=1,2,\cdots,k$) such that Proposition \ref{prop:reach} holds with $V_i(x_i)=x_i^TP_ix_i$, $c=c^m_P$, and $v=c^{m+1}_P$ for each $m=1,\ldots, k-1$. It follows that if $c^1_P<p$, where $p$ is from \textbf{Step 1}, then the set $\mathcal{P}(c^k_P)$ is an enlarged region of attraction for the interconnected system (\ref{eq:sys}).

\textbf{(3) Reachability by general vector Lyapunov functions: } Following \textbf{Step 2}, we employ SMT solvers to verify that the neural vector neural Lyapunov $V(x)=(V_1(x_1),\ldots,V_l(x_l))$ trained according to Section \ref{sec:method:train} can satisfy Proposition \ref{prop:reach} with $c=c^m_V$, and $v=c^{m+1}_V$, along with a sequence of vectors $c^m_V$ ($m=1,2,\cdots,q$) such that the following set containment condition is met
$$
\mathcal{V}(c^1_V):=\set{x\in\Real^N:\, V(x)\le c^1_V} \subset \mathcal{P}(c^k_P), 
$$
where $c^k_P$ is from \textbf{Step 2}. Following this, we potentially obtain a further enlarged region of attraction $\mathcal{V}(c^q_V)$. 

\subsection{Compositional nature of SMT verification}

The training process is naturally compositional because the neural Lyapunov functions are trained for individual systems separately. In this subsection, we discuss how verification of the aforementioned Lyapunov conditions in terms of inequalities can also be verified in a compositional fashion. 

Local stability is done through verifying (\ref{eq:ineq1}) and (\ref{eq:ineq2}), which is compositional because (\ref{eq:ineq1}) only involves one subsystem and (\ref{eq:ineq2}) only involves two. By examining the nature of $G_{ij}$ in concrete examples, we may be able to further decompose the verification of (\ref{eq:ineq2}). 

The search for the sequences $\{c^m_P\}$ and $\{c^m_V\}$ in \textbf{Steps 2 and 3} of Section \ref{sec:method:verify} can be formulated as successively solving a minimization problem of the form 
\begin{equation}
\begin{aligned}
& \underset{c_i}{\text{minimize}}
& & c_i \\
& \text{subject to}
& & (\ref{eq:ineq3}),
\end{aligned}
\end{equation}
where $v$ is given. We can solve this problem for all $i=1,\ldots,l$ and then set $v=c$ and continue until the value of $c$ cannot be improved further by predefined  threshold. 

The constraint given by the inequality (\ref{eq:ineq3}) can be further decomposed. We can verify the existence of constants $g_{ij}$
\begin{equation}\label{eq:ineq4}
\nabla V_i(x_i) G_{ij}(x_i,x_j) \le g_{ij}
\end{equation}
for all $x_i\in \mathcal{V}_i(v_i)$ and $x_j\in \mathcal{V}_j(v_j)$ and replace (\ref{eq:ineq3}) with
\begin{equation}\label{eq:ineq5}
\nabla V_i(x_i) f_i(x_i) + \sum_{j\neq i} g_{ij} \le \eps < 0.
\end{equation}
The compositional nature of (\ref{eq:ineq4}) and (\ref{eq:ineq5}) is evident. Both of them can be verified readily by numerical SMT solvers. We use dReal \cite{gao2013dreal} for verification in the numerical examples in this paper due to its $\delta$-completeness guarantees. Here, $\delta$ is a user-defined precision parameter. Verification of a given formula is guaranteed to succeed if the ``$\delta$-weakening" of the negation of the formula is unsatisfiable.

\section{Numerical results}\label{sec:numerical}

\subsection{Networked Van der Pol oscillators}

Inspired by \cite{kundu2015sum}, we consider a network of reversed Van der Pol equations of the form
\begin{align*}
\dot x_{i1} &= -x_{i2}, \\
\dot x_{i2} &= x_{i1} - \mu_i (1 - x_{i1}^2)x_{i2} + \sum_{j\neq i} \mu_{ij} x_{i1}x_{j2},
\end{align*}
where $\mu_i\in (0.5, 2.5)$ and $\mu_{ij}$ represents the interconnection strength. The parameters $\mu_i$ are randomly generated and take the following values for $i=1,\ldots,10$:
$$
 [1.25, 2.4, 1.96, 1.7, 0.81, 0.81, 0.62, 2.23, 1.7, 1.92]. 
$$
We set that $\mu_{ij}\in (-0.1,0.1)$ and the number of nonzero entries in $\set{\mu_{ij}}$ for each $i$ is fewer than $3$ or $4$ (referred to as density below). We choose the number of total subsystems $l=10$. Two network topologies are depicted in Fig. \ref{fig:network} for density equal to 3 and 4. The total dimension of the interconnected system is therefore 20, which is beyond the capability of current SMT or SDP-based synthesis of Lyapunov functions if a monolithic approach is taken. Using the approach proposed in this paper, we are able to train neural vector Lyapunov functions and verify regions of attraction. Comparisons with sums-of-squares (SOS) synthesis demonstrate that the neural Lyapunov functions, computed by solving Zubov's equation, outperform the ``expanding interior'' approach taken by SOS design.

\begin{figure}[ht]
    \centering
    \begin{subfigure}{0.2\textwidth}
        \centering
        \begin{tikzpicture}[scale=0.9, transform shape]
            % Define the nodes
            \foreach \i in {1,...,10} {
                \node[circle, fill=blue!50] (N\i) at (360/10*\i:2) {\i};
            }
            
            % Draw the arrows based on the first matrix
    \draw[->] (N9) -- (N1);
    \draw[->] (N6) -- (N2);
    \draw[->] (N9) -- (N2);
    \draw[->] (N2) -- (N3);
    \draw[->] (N10) -- (N3);
    \draw[->] (N3) -- (N4);
    \draw[->] (N9) -- (N4);
    \draw[->] (N7) -- (N5);
    \draw[->] (N10) -- (N5);
    \draw[->] (N1) -- (N6);
    \draw[->] (N8) -- (N6);
    \draw[->] (N3) -- (N7);
    \draw[->] (N5) -- (N7);
    \draw[->] (N3) -- (N8);
    \draw[->] (N7) -- (N8);
    \draw[->] (N2) -- (N9);
    \draw[->] (N6) -- (N9);
    \draw[->] (N4) -- (N10);
    \draw[->] (N8) -- (N10);
        \end{tikzpicture}
        % \caption{Figure from the first matrix}
    \end{subfigure}
    % \hfill
    \hspace{2em}
    \begin{subfigure}{0.2\textwidth}
        \centering
        \begin{tikzpicture}[scale=0.9, transform shape]
            % Define the nodes
            \foreach \i in {1,...,10} {
                \node[circle, fill=blue!50] (N\i) at (360/10*\i:2) {\i};
            }
            
            % Draw the arrows based on the second matrix
            \draw[->] (N2) -- (N1);
            \draw[->] (N10) -- (N1);
            \draw[->] (N3) -- (N2);
            \draw[->] (N7) -- (N2);
            \draw[->] (N4) -- (N3);
            \draw[->] (N8) -- (N3);
            \draw[->] (N9) -- (N3);
            \draw[->] (N6) -- (N4);
            \draw[->] (N10) -- (N4);
            \draw[->] (N1) -- (N5);
            \draw[->] (N3) -- (N5);
            \draw[->] (N8) -- (N5);
            \draw[->] (N2) -- (N6);
            \draw[->] (N3) -- (N6);
            \draw[->] (N10) -- (N6);
            \draw[->] (N2) -- (N7);
            \draw[->] (N3) -- (N7);
            \draw[->] (N4) -- (N7);
            \draw[->] (N2) -- (N8);
            \draw[->] (N5) -- (N9);
            \draw[->] (N6) -- (N9);
            \draw[->] (N2) -- (N10);
            \draw[->] (N4) -- (N10);
            \draw[->] (N5) -- (N10);
        \end{tikzpicture}
        % \caption{Figure from the second matrix}
    \end{subfigure}
    \caption{Two networks with different densities of interconnections by varying the number of nonzero entries in the set ${ \mu_{ij} }$ for the Van der Pol network.}    \label{fig:network}
\end{figure}
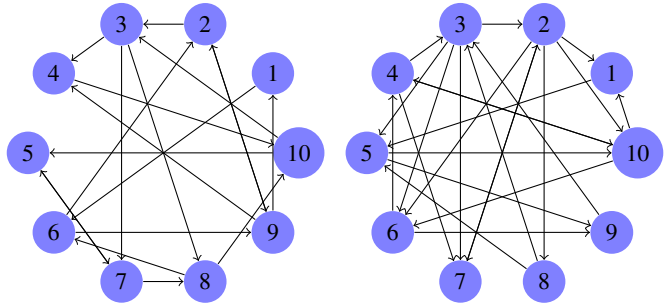

\subsection{Hyperparameters}

The domain is $[-2.5,2.5]\times [-5.5,5.5]$. The number of collocation points $N=300,000$ with a batch size of $32$ for training. We also generate $N_d=3,000$ data points and use the same batch size to evaluate the data loss (\ref{eq:L_d}). We train for a maximum of 20 epochs with a learning rate of $0.001$. 

\subsection{Results and discussions}

\begin{table*}[ht]
\centering
\begin{tabular}{ccccccc}
\toprule
\textbf{Density} & \makecell{\textbf{SOS}\\ \textbf{(scale)}}  & \textbf{Layer} & \textbf{Width} & \makecell{\textbf{Neural}\\ \textbf{(scale)}} & \makecell{\textbf{Neural}\\ \textbf{(sub-level sets)}} & \makecell{\textbf{Verification} \\\textbf{Time (s)}} \\
\midrule
3 & 0.15 & 2 & 20 & 0.27  &  $[0.21,\, 0.18,\, 0.19,\, 0.19,\, 0.24,\, 0.23,\, 0.24,\, 0.16,\, 0.19,\, 0.18]$ & 18,304\\
3 & $-$  & 2 & 30 & 0.33 & $[0.26,\, 0.22,\, 0.23,\, 0.24,\, 0.30,\, 0.29,\, 0.30,\, 0.20,\, 0.24,\, 0.22]$ & 38,521 \\
3 & $-$ & 3 & 10 & 0.29 &  $[0.24,\, 0.19,\, 0.19,\, 0.22,\, 0.26,\, 0.26,\, 0.27,\, 0.19,\, 0.21,\,  0.21]$ & 46,611\\
\midrule
4 & 0.09  & 2 & 20  & 0 & 0 & 2,952\\
4 & $-$ & 2 & 30  & 0.23 & $[0.18,\, 0.15,\, 0.16,\, 0.16,\, 0.21,\, 0.20,\, 0.21,\, 0.14,\, 0.17,\, 0.15]$ & 35,831\\
4 & $-$ & 3 & 10 & 0.21 & $[0.18,\, 0.14,\, 0.14,\, 0.16,\, 0.19,\, 0.19,\, 0.20,\, 0.14,\, 0.15,\, 0.15]$ & 35,736 \\
\bottomrule
\end{tabular}
\caption{Verification results for various neural network architectures: ``scale" denotes the largest verifiable scalar factor used to scale down the initially verified sub-level sets for individual systems. A scale equal to 0 indicates a failure to verify a stability region. Verification time includes bisection and all iterations involved in Step 2 of Section \ref{sec:method:verify}.}
\label{tab:results}
\end{table*}

We trained feedforward neural networks with $\tanh(\cdot)$ activation as Lyapunov functions for the individual subsystems using the approach proposed in \cite{liu2023towards,liu2023physics} and outlined in Section \ref{sec:method:train}. We then compositionally verified regions of attraction using the approach detailed in Section \ref{sec:method:verify}.

\begin{figure}[!htbp]
    \centering
    \includegraphics[width=0.47\textwidth]{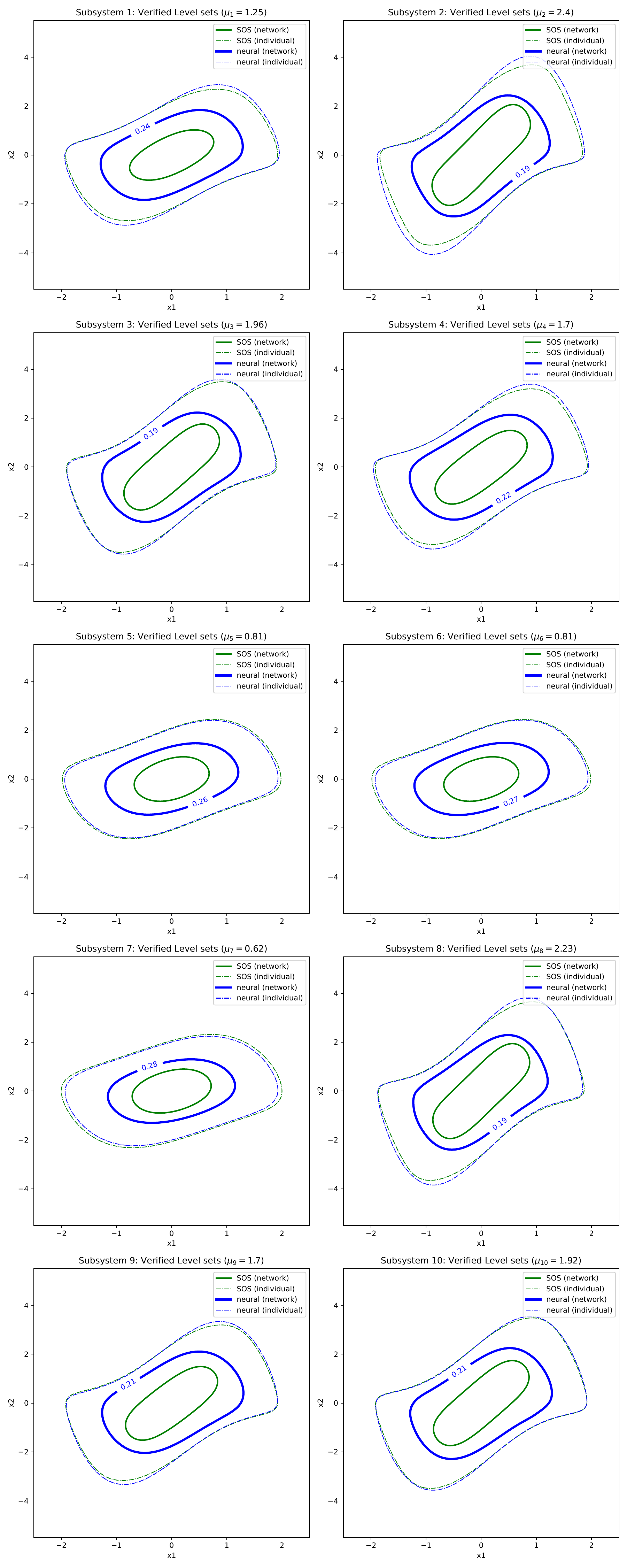}
    \caption{Neural stability analysis on a 20-dimensional interconnected system: sub-level sets of neural Lyapunov functions \cite{liu2023towards}, which define regions of attraction, are verified by the SMT solver dReal \cite{gao2013dreal} using the approach described in Section \ref{sec:method:verify}. The thick solid lines represent the regions of attraction for the interconnected system, while the thin dashed lines indicate those for individual subsystems. }    \label{fig:level_sets}
\end{figure}

For density equal 3 (network depicted on the left of Fig. \ref{fig:network}), the regions of attraction for individual subsystems, certified using SMT solvers for neural Lyapunov functions and SOS Lyapunov functions using semidefinite programming, are shown in \ref{fig:level_sets}. It can be observed that, for individual systems, the verified level sets defining regions of attraction are comparable between neural and SOS Lyapunov functions. However, the level sets for the interconnected system certified by neural Lyapunov functions are significantly better than those of the SOS Lyapunov functions. In fact, the ROA for the interconnected system certified by the SOS Lyapunov functions is even worse than that obtained using quadratic Lyapunov functions. One explanation for this is that the ``interior expanding'' approach \cite{packard2010help} expands the region of attraction in a somewhat ad hoc manner, whereas the Zubov approach solves a PDE that characterizes the region of attraction. The PDE used for training is consistent among all subsystems. This might be beneficial as they provide gains (c.f. equation (\ref{eq:ineq4})) that are more compatible among subsystems. However, this is just an intuitive explanation, and more research is likely needed before drawing firm conclusions. Moreover, for simplicity, we only considered scaled versions of the originally verified level sets for quadratic, neural, and SOS Lyapunov functions, respectively. Initializing Step 2 in Section \ref{sec:method:verify} with different initial level sets, as done in \cite{kundu2015sum}, might yield different results. Further comparisons will be conducted in future work.

We also conducted numerical experiments for the network depicted on the right of Fig. \ref{fig:network}. We summarize all the verification results in Table \ref{tab:results}. 

The computation time is reported as follows. To train a neural network on a 2020 MacBook Pro with a 2 GHz Quad-Core Intel Core i5 without any GPUs, it takes approximately 500 seconds to complete the training with the hyperparameters set as above. For verification, it takes about 60 seconds to verify a two-hidden-layer neural network with 20 neurons in each layer, roughly 200 seconds to verify a two-hidden-layer neural network with 30 neurons in each layer, and 2000 seconds to verify a three-hidden-layer neural network with 10 neurons each. We note that verification can be expedited if multiple cores are utilized, as verification in dReal leverages interval analysis, which is easily parallelizable. When verifying vector Lyapunov functions for the interconnected system, completing Step 2 in Section \ref{sec:method:verify} might require multiple iterations to reach the desired level sets. Implementing a bisection procedure to ascertain the optimal scaling down factor from the initially verified level sets can further extend computational time. The cumulative verification time for these processes is presented in Table \ref{tab:results}. Verification time was captured on an Intel(R) Xeon(R) CPU E5-2683 v4 @ 2.10GHz with 32 cores, a single CPU node through the Digital Research Alliance of Canada. The code for reproducing the example can be found at \url{https://github.com/j49liu/acc24-compositional-neural-lyapunov}. Future development for this line of research will be supported by the tool LyZNet, which targets physics-informed learning of Lyapunov functions with formal guarantees provided by SMT verification \cite{liu2024lyznet}.

\section{Conclusions}\label{sec:conclusion}

This paper introduces a compositional method to train and verify neural Lyapunov functions. Leveraging the compositional structure of interconnected nonlinear systems, we verified neural Lyapunov functions for high-dimensional systems, outperforming current SMT solvers with a monolithic method. Numerical results demonstrate that when trained using Zubov's PDE and verified compositionally with SMT solvers, neural Lyapunov functions excel over SOS Lyapunov functions obtained from semidefinite programming. Future directions will explore compositional strategies for neural Lyapunov control design.

\bibliographystyle{plain}
\bibliography{acc24}

\begin{thebibliography}{10}

\bibitem{abate2020formal}
Alessandro Abate, Daniele Ahmed, Mirco Giacobbe, and Andrea Peruffo.
\newblock Formal synthesis of {L}yapunov neural networks.
\newblock {\em IEEE Control Systems Letters}, 5(3):773--778, 2020.

\bibitem{ahmed2020automated}
Daniele Ahmed, Andrea Peruffo, and Alessandro Abate.
\newblock Automated and sound synthesis of {L}yapunov functions with smt
  solvers.
\newblock In {\em Proc. of TACAS}, pages 97--114. Springer, 2020.

\bibitem{bailey1965application}
Fo~N Bailey.
\newblock The application of {L}yapunov’s second method to interconnected
  systems.
\newblock {\em Journal of the Society for Industrial and Applied Mathematics,
  Series A: Control}, 3(3):443--462, 1965.

\bibitem{bellman1962vector}
Richard Bellman.
\newblock Vector {L}yapunov functions.
\newblock {\em Journal of the Society for Industrial and Applied Mathematics,
  Series A: Control}, 1(1):32--34, 1962.

\bibitem{bitsoris1977stability}
G~Bitsoris and Ch~Burgat.
\newblock Stability analysis of complex discrete systems with locally and
  globally stable subsystems.
\newblock {\em International Journal of Control}, 25(3):413--424, 1977.

\bibitem{chang2019neural}
Ya-Chien Chang, Nima Roohi, and Sicun Gao.
\newblock Neural {L}yapunov control.
\newblock {\em Advances in Neural Information Processing Systems}, 32, 2019.

\bibitem{dai2021lyapunov}
Hongkai Dai, Benoit Landry, Lujie Yang, Marco Pavone, and Russ Tedrake.
\newblock {L}yapunov-stable neural-network control.
\newblock In {\em Proc. of Robotics: Science and Systems}, 2021.

\bibitem{dawson2023safe}
Charles Dawson, Sicun Gao, and Chuchu Fan.
\newblock Safe control with learned certificates: A survey of neural
  {L}yapunov, barrier, and contraction methods for robotics and control.
\newblock {\em IEEE Transactions on Robotics}, 2023.

\bibitem{gaby2022lyapunov}
Nathan Gaby, Fumin Zhang, and Xiaojing Ye.
\newblock {L}yapunov-net: A deep neural network architecture for {L}yapunov
  function approximation.
\newblock In {\em 2022 IEEE 61st Conference on Decision and Control (CDC)},
  pages 2091--2096. IEEE, 2022.

\bibitem{gao2013dreal}
Sicun Gao, Soonho Kong, and Edmund~M Clarke.
\newblock dreal: an {SMT} solver for nonlinear theories over the reals.
\newblock In {\em Proc. of CADE}, pages 208--214, 2013.

\bibitem{giesl2007construction}
Peter Giesl.
\newblock {\em Construction of Global {L}yapunov Functions Using Radial Basis
  Functions}, volume 1904.
\newblock Springer, 2007.

\bibitem{giesl2015review}
Peter Giesl and Sigurdur Hafstein.
\newblock Review on computational methods for {L}yapunov functions.
\newblock {\em Discrete \& Continuous Dynamical Systems-B}, 20(8):2291, 2015.

\bibitem{grune2021computing}
Lars Gr{\"u}ne.
\newblock Computing {L}yapunov functions using deep neural networks.
\newblock {\em Journal of Computational Dynamics}, 8(2), 2021.

\bibitem{haddad2008nonlinear}
Wassim~M Haddad and VijaySekhar Chellaboina.
\newblock {\em Nonlinear Dynamical Systems and Control: A {L}yapunov-based
  Approach}.
\newblock Princeton university press, 2008.

\bibitem{jones2021converse}
Morgan Jones and Matthew~M Peet.
\newblock Converse {L}yapunov functions and converging inner approximations to
  maximal regions of attraction of nonlinear systems.
\newblock In {\em Proc. of CDC}, pages 5312--5319. IEEE, 2021.

\bibitem{kang2021data}
Wei Kang, Kai Sun, and Liang Xu.
\newblock Data-driven computational methods for the domain of attraction and
  {Z}ubov's equation.
\newblock {\em arXiv preprint arXiv:2112.14415}, 2021.

\bibitem{khodadadi2014estimation}
Larissa Khodadadi, Behzad Samadi, and Hamid Khaloozadeh.
\newblock Estimation of region of attraction for polynomial nonlinear systems:
  A numerical method.
\newblock {\em ISA Transactions}, 53(1):25--32, 2014.

\bibitem{kundu2015sum}
Soumya Kundu and Marian Anghel.
\newblock A sum-of-squares approach to the stability and control of
  interconnected systems using vector {L}yapunov functions.
\newblock In {\em Proc. of ACC}, pages 5022--5028. IEEE, 2015.

\bibitem{lagaris1998artificial}
Isaac~E Lagaris, Aristidis Likas, and Dimitrios~I Fotiadis.
\newblock Artificial neural networks for solving ordinary and partial
  differential equations.
\newblock {\em IEEE Transactions on Neural Networks}, 9(5):987--1000, 1998.

\bibitem{lakshmikantham1969differential}
Vangipuram Lakshmikantham and Srinivasa Leela.
\newblock {\em Differential and Integral Inequalities: Theory and Applications
  (Volume I: Ordinary Differential Equations)}.
\newblock Academic Press, 1969.

\bibitem{liu2023physics}
Jun Liu, Yiming Meng, Maxwell Fitzsimmons, and Ruikun Zhou.
\newblock Physics-informed neural network {L}yapunov functions: {PDE}
  characterization, learning, and verification.
\newblock {\em arXiv:2312.09131}, 2023.

\bibitem{liu2023towards}
Jun Liu, Yiming Meng, Maxwell Fitzsimmons, and Ruikun Zhou.
\newblock Towards learning and verifying maximal neural {L}yapunov functions.
\newblock In {\em Proc. of CDC}, 2023.

\bibitem{liu2023compositionally_acc}
Jun Liu, Yiming Meng, Maxwell Fitzsimmons, and Ruikun Zhou.
\newblock Compositionally verifiable vector neural {L}yapunov functions for
  stability analysis of interconnected nonlinear systems.
\newblock In {\em Proc. of ACC}, 2024.

\bibitem{liu2024lyznet}
Jun Liu, Yiming Meng, Maxwell Fitzsimmons, and Ruikun Zhou.
\newblock Ly{ZN}et: A lightweight python tool for learning and verifying neural
  {L}yapunov functions and regions of attraction.
\newblock In {\em Proc. of HSCC}, 2024.

\bibitem{lyapunov1992general}
Aleksandr~Mikhailovich {L}yapunov.
\newblock The general problem of the stability of motion.
\newblock {\em International Journal of Control}, 55(3):531--534, 1992.

\bibitem{matrosov1962theory}
VM~Matrosov.
\newblock On the theory of stability of motion.
\newblock {\em Journal of Applied Mathematics and Mechanics}, 26(6):1506--1522,
  1962.

\bibitem{michel1983status}
A~Michel.
\newblock On the status of stability of interconnected systems.
\newblock {\em IEEE Transactions on Circuits and Systems}, 30(6):326--340,
  1983.

\bibitem{packard2010help}
Andy Packard, Ufuk Topcu, Peter~J Seiler~Jr, and Gary Balas.
\newblock Help on {SOS}.
\newblock {\em IEEE Control Systems Magazine}, 30(4):18--23, 2010.

\bibitem{papachristodoulou2002construction}
Antonis Papachristodoulou and Stephen Prajna.
\newblock On the construction of {L}yapunov functions using the sum of squares
  decomposition.
\newblock In {\em Proc. of CDC}, volume~3, pages 3482--3487. IEEE, 2002.

\bibitem{papachristodoulou2005tutorial}
Antonis Papachristodoulou and Stephen Prajna.
\newblock A tutorial on sum of squares techniques for systems analysis.
\newblock In {\em Proc. of ACC}, pages 2686--2700. IEEE, 2005.

\bibitem{plemmons1977m}
Robert~J Plemmons.
\newblock $m$-matrix characterizations. {I}--nonsingular $m$-matrices.
\newblock {\em Linear Algebra and Its Applications}, 18(2):175--188, 1977.

\bibitem{raissi2019physics}
Maziar Raissi, Paris Perdikaris, and George~E Karniadakis.
\newblock Physics-informed neural networks: A deep learning framework for
  solving forward and inverse problems involving nonlinear partial differential
  equations.
\newblock {\em Journal of Computational physics}, 378:686--707, 2019.

\bibitem{sepulchre2012constructive}
Rodolphe Sepulchre, Mrdjan Jankovic, and Petar~V Kokotovic.
\newblock {\em Constructive Nonlinear Control}.
\newblock Springer, 2012.

\bibitem{siljak1978large}
Dragoslav~D Siljak.
\newblock {\em Large-Scale Dynamic Systems: Stability and Structure}.
\newblock Dover, 1978.

\bibitem{tan2008stability}
Weehong Tan and Andrew Packard.
\newblock Stability region analysis using polynomial and composite polynomial
  {L}yapunov functions and sum-of-squares programming.
\newblock {\em IEEE Transactions on Automatic Control}, 53(2):565--571, 2008.

\bibitem{topcu2008local}
Ufuk Topcu, Andrew Packard, and Peter Seiler.
\newblock Local stability analysis using simulations and sum-of-squares
  programming.
\newblock {\em Automatica}, 44(10):2669--2675, 2008.

\bibitem{vidyasagar1981input}
Mathukumalli Vidyasagar.
\newblock {\em Input-Output Analysis of Large-Scale Interconnected Systems:
  Decomposition, Well-Posedness and Stability}.
\newblock Springer, 1981.

\bibitem{walter1970differential}
Wolfgang Walter.
\newblock {\em Differential and Integral Inequalities}, volume~55.
\newblock Springer, 1970.

\bibitem{wazewski1950systemes}
Tadeusz Wa{\.z}ewski.
\newblock Syst{\`e}mes des {\'e}quations et des in{\'e}galit{\'e}s
  diff{\'e}rentielles ordinaires aux deuxi{\`e}mes membres monotones et leurs
  applications.
\newblock {\em Ann. Soc. Polon. Math.}, pages 112--166, 1950.

\bibitem{weissenberger1973stability}
Stein Weissenberger.
\newblock Stability regions of large-scale systems.
\newblock {\em Automatica}, 9(6):653--663, 1973.

\bibitem{zhou2022neural}
Ruikun Zhou, Thanin Quartz, Hans De~Sterck, and Jun Liu.
\newblock Neural {L}yapunov control of unknown nonlinear systems with stability
  guarantees.
\newblock In {\em Advances in Neural Information Processing Systems}, 2022.

\bibitem{zubov1964methods}
V.~I. {Z}ubov.
\newblock {\em Methods of {A. M. {L}yapunov} and Their Application}.
\newblock Noordhoff, 1964.

\end{thebibliography}

% \newpage 

\appendix 

\subsection{Proof of Proposition \ref{prop:local}}

\begin{proof}
Define $V(x)=(V_1(x_1),\ldots,V_l(x_l))$, where $V_i(x_i) = x_i^TP_ix_i$. 

By the mean value theorem, we have
\begin{equation}
g_i(x_i) = \int_0^1 D g_i(tx_i)dt\cdot x_i,
\end{equation}
and
\begin{equation}
\begin{aligned}
    G_{ij}(x_i,x_j) = \int_0^1 D G_{ij}(tx_i,t x_j)dt \cdot \begin{bmatrix} x_i\\ x_j \end{bmatrix} \\
\end{aligned}
\end{equation}
We have 
\begin{align}
    \dot{V}_i(x_i) &= 2x_i^T P_i (f_i(x_i) + G_{ij}(x_i,x_j)) \notag\\
    & 2x_i^T P_i (A_ix_i + g_i(x_i) + \sum_{j\neq i}G_{ij}(x_i,x_j)) \notag\\
    &= -x_i^T Q_i x_i + 2 x_i^T P_i (g_i(x_i) + G_{ij}(x_i,x_j))\notag\\
 & = -x_i^T Q_i x_i + 2x_i^T\int_0^1 P_i D g_i(tx_i)dt\cdot x_i \notag\\
&\quad +  2x_i^T\int_0^1 \sum_{j\neq i} P_i D G_{ij}(tx_i,t x_j)dt \cdot \begin{bmatrix} x_i \\ x_j  \end{bmatrix} \notag\\
& \le - \frac{\lambda_{\min}(Q_i)}{\lambda_{\max}(P_i)} x_i^T P_i x_i + 2\sup_{0\le t \le 1}\norm{P_i D g_i(tx_i)}\norm{x_i}^2 \notag\\
& \quad + 2\norm{x_i} \sum_{j\neq i} \sup_{0\le t \le 1}\norm{P_i D G_{ij}(tx_i,t x_j)}\norm{\begin{bmatrix} x_i\\ x_j  \end{bmatrix}}\notag\\
& \le - \frac{\lambda_{\min}(Q_i)}{\lambda_{\max}(P_i)} x_i^T P_i x_i + 2\frac{r_{ii}+\sum_{j\neq i} r_{ij}}{\lambda_{\min}(P_i)} x_i^T P_i x_i \notag\\
& \quad + \sum_{j\neq i} \frac{r_{ij}}{\lambda_{\min}(P_j)} x_j^T P_j x_j\notag\\ 
& = \lambda_{ii} V_i(x_i) + \sum_{j\neq i} \lambda_{ij}V_j(x_j),\label{eq:dotV}
\end{align}
provided that $x_i\in \mathcal{P}_i(c_i)$ for all $i\in \set{1,\ldots,l}$.

We first show that the set $\mathcal{P}(p)$ is forward invariant. Let $\phi(t,x)$ be a solution of the interconnected system (\ref{eq:sys}) starting from $\mathcal{P}(p)$. By (\ref{eq:dotV}), if $V_i(x_i)=p_i$ for some $i$, the derivative of $V_i(x_i)$ along solutions of the interconnected system satisfies 
$$
\dot{V}_i(x_i) \le \lambda_{ii} V_i(x_i) + \sum_{j\neq i} \lambda_{ij}V_j(x_j) \le \lambda_{ii}p_i  + \sum_{j\neq i} \lambda_{ij} p_j <0.
$$
The last inequality is because $\lambda_{ij}\ge 0$ for all $j\neq i$ and $\Lambda p<0$. This proves that it is impossible for $\phi(t,x)$ to escape the sub-level set $V(x) \le p$. Hence, $\mathcal{P}(p)$ is forward invariant. 

Since $\phi(t,x)$ remains in $\mathcal{P}(p)$ for all $t\ge 0$, by inequality (\ref{eq:dotV}) and the comparison lemma (Lemma \ref{lem:compare} in the Appendix), we have $V(\phi(t,x))\le e^{\Lambda t} V(x)$ for all $t\ge 0$. By a well-known fact of non-singular $M$-matrix \cite{plemmons1977m} (see Remark \ref{rem:m-matrix}), $\Lambda$ is Hurwitz and local asymptotic stability follows. 
\end{proof}

\begin{rem}
It can be easily seen from the proof that the estimate involving the norm of $\norm{P_i D G_{ij}(x_i,x_j)}$ is not necessarily sharp. In particular, one may derive alternative bounds for the term 
$$
2\norm{x_i} \sum_{j\neq i} \sup_{0\le t \le 1}\norm{P_i D G_{ij}(tx_i,t x_j)}\norm{\begin{bmatrix} x_i\\ x_j  \end{bmatrix}}
$$ 
in the proof. We shall not pursue this in the current paper but will experiment with further improvements within the toolbox \cite{liu2024lyznet}.
\end{rem}

\subsection{Proof of Proposition \ref{prop:reach}}

\begin{proof}
The proof is straightforward. It is clear that the set $\mathcal{V}(v)$ is forward invariant for the interconnected system (\ref{eq:sys}). Furthermore, the set $\set{x_i\in\Real^{n_i}:\, V_i(x_i) \le v_i }$ is invariant for the $i$th subsystem, provided that $x_j\in \mathcal{V}_j(v_j)$. Within the set $\set{x_i\in\Real^{n_i}:\, c_{i}\le V_i(x_i) \le v_i }$, the derivative of $V_i$ along solutions of (\ref{eq:sys}) is strictly negative and bounded above by a negative real number. Hence solutions must reach $\mathcal{V}(c)$ in finite time and remain there afterwards. 
\end{proof}

\subsection{Comparison lemma for vector Lyapunov functions}

The following comparison lemma involving nonnegative matrix \cite{wazewski1950systemes,bellman1962vector} is well-known in the literature of differential inequalities and provides a useful tool for analyzing stability using vector Lyapunov functions \cite{bellman1962vector,bailey1965application}. We provide a simple self-contained proof for completeness. This linear differential inequality can also be seen as a special case of more general nonlinear comparison lemmas using quasimonotonicity (see, e.g., \cite{lakshmikantham1969differential,walter1970differential}). 

\begin{lem}[Comparison lemma\cite{bellman1962vector,wazewski1950systemes}]\label{lem:compare}
Let $\Lambda=(\lambda_{ij})\in\Real^{l\times l}$ be a matrix such that $\lambda_{ij}\ge 0$ when $i\neq j$ (i.e., with nonnegative off-diagonal entries; also known as a Metzler matrix). Assume that $V(\cdot)\in\Real^l$ satisfies the differential inequality 
\begin{equation}\label{eq:dineq}
\dot{V}(t) \le \Lambda V(t), \quad V(0)=v_0,
\end{equation}
and $W(\cdot)\in\Real^l$ solves the differential equation
\begin{equation}\label{eq:de}
\dot{W}(t) = \Lambda W(t), \quad W(0)=v_0, 
\end{equation}
for all $t\ge 0$. Then $V(t)\le W(t)$ for all $t\ge 0$. 
\end{lem}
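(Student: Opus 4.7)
The plan is to exploit the fact that a Metzler matrix generates an entrywise nonnegative semigroup, that is, $e^{\Lambda t}$ has all nonnegative entries for every $t \ge 0$. Once this is in hand, the componentwise inequality $V(t) \le W(t)$ will follow by applying the variation of constants formula to the difference $Z(t) := W(t) - V(t)$ and observing that the resulting driving term is nonnegative.

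First I would establish nonnegativity of the semigroup. Choose a scalar $c \ge 0$ large enough that $\Lambda + cI$ has all nonnegative entries; this is possible because the off-diagonal entries of $\Lambda$ are already nonnegative by the Metzler assumption, so one only needs $c \ge -\lambda_{ii}$ for every $i$. Then
\begin{equation*}
e^{\Lambda t} \;=\; e^{-ct}\, e^{(\Lambda + cI)\, t} \;=\; e^{-ct} \sum_{k=0}^{\infty} \frac{((\Lambda + cI)\, t)^k}{k!}.
\end{equation*}
Since $\Lambda + cI$ is entrywise nonnegative, every power $(\Lambda + cI)^k$ is as well, so the series is a sum of entrywise nonnegative matrices, and multiplication by the nonnegative scalar $e^{-ct}$ preserves this. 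Hence $e^{\Lambda t} \ge 0$ entrywise for all $t \ge 0$.

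Next I would manipulate the inequality. Define the slack vector $\eta(t) := \Lambda V(t) - \dot{V}(t)$, which is componentwise nonnegative by (\ref{eq:dineq}), and set $Z(t) := W(t) - V(t)$. Subtracting the two relations gives $\dot{Z}(t) = \Lambda Z(t) + \eta(t)$ with $Z(0) = 0$. The variation of constants formula then yields
\begin{equation*}
Z(t) \;=\; \int_0^t e^{\Lambda (t - s)}\, \eta(s)\, ds.
\end{equation*}
Since both the matrix $e^{\Lambda(t-s)}$ and the vector $\eta(s)$ have nonnegative entries for every $s \in [0, t]$, the integrand is entrywise nonnegative, so $Z(t) \ge 0$, i.e., $V(t) \le W(t)$ componentwise for every $t \ge 0$.

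The main obstacle is the nonnegativity of the semigroup $e^{\Lambda t}$: this is the nontrivial structural fact that encodes quasimonotonicity for linear systems with nonnegative off-diagonal coupling. The shift-and-expand argument above handles it cleanly; without it one would have to argue directly that the nonnegative orthant is forward invariant for $\dot{W} = \Lambda W$ via a boundary-tangency (Nagumo-type) argument, which is longer. The remaining steps, namely introducing $\eta$ and invoking variation of parameters, are then routine.
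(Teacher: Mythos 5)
Your proposal is correct and follows essentially the same route as the paper: both prove entrywise nonnegativity of $e^{\Lambda t}$ by shifting $\Lambda$ by a multiple of the identity and expanding the exponential series, then conclude via the variation of constants formula. The only cosmetic difference is that you apply the formula to the difference $Z = W - V$ with nonnegative slack $\eta$, whereas the paper applies it directly to $V$ with nonpositive residual $u = -\eta$.
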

\begin{proof}
We show that every entry of $e^{\Lambda t}$ for $t\ge 0$ is nonnegative. To see this, consider $B=\Lambda  + bI$, where $I$ is the identity matrix and $b>0$ is a constant sufficiently large such that every entry of $B$ is nonnegative. Clearly, every entry of $e^{Bt}$ is nonnegative by the definition of matrix exponential (in fact with positive diagonal entries). It follows that $e^{\Lambda t}=e^{\Lambda t + bI t} e^{-bIt} = e^{Bt} e^{-bt}$ has all nonnegative entries (and positve diagonal entries).  

We can write (\ref{eq:dineq}) as 
$$
\dot{V}(t) =  \Lambda V(t) + u(t),
$$
where $u(t)=\dot{V}(t) - \Lambda V(t) \le 0$ for all $t\ge 0$. By the general solution to a nonhomogeneous linear equation, we have
$$
V(t) = e^{\Lambda t}v_0 + \int_0^t e^{\Lambda(t-s)}u(s)ds \le e^{\Lambda t}v_0 = W(t),
$$
where the inequality holds because all entries of $e^{\Lambda(t-s)}$ are nonnegative and $u(s)$ is a nonpositive vector. 
\end{proof}

\end{document}